\newtheorem{lemma}{Lemma}
\newtheorem{assumption}{Assumption}
\newtheorem{defi}{Definition}
\newtheorem{exx}{Example}
\newtheorem{remm}{Remark}
\newenvironment{definition}{\begin{defi}\rm }{\end{defi}}
\newenvironment{proof}{\noindent {\em Proof.}}{\hfill \hspace*{1pt} \hfill $\square$}
\newenvironment{remark}{\begin{remm}\rm }{\hfill \hspace*{1pt} \hfill $\lrcorner$\end{remm}}
\newcommand\real{\ensuremath{{\mathbb R}}}
\newcommand{\ball}{\mathbb{B}}
\newcommand{\dom}{\mathrm{dom}\,}
\begin{document}

\begin{frontmatter}

\title{Hierarchical stability of nonlinear hybrid systems}

\author[roma]{Mario Sassano}\ead{mario.sassano@uniroma2.it}
\author[toulouse,trento]{Luca Zaccarian}\ead{zaccarian@laas.fr}

\address[roma]{Dipartimento di Ingegneria Civile e Ingegneria Informatica,
``Tor Vergata'', Via del Politecnico 1,
00133, Rome, Italy.\\}
\address[toulouse]{CNRS, LAAS, 7 avenue du colonel Roche, F-31400 Toulouse,
Univ. de Toulouse, LAAS, F-31400 Toulouse, France.\\}
\address[trento]{Dipartimento di Ingegneria Industriale, University of Trento, Italy.}

\begin{abstract}
In this short note we prove a hierarchical stability result that applies to hybrid dynamical systems satisfying the hybrid basic conditions of (Goebel et al., 2012). In particular, we 
establish sufficient conditions for uniform asymptotic stability of a compact set
based on some hierarchical stability assumptions involving two nested closed sets containing such a compact set. 
Moreover, mimicking the well known result for cascaded systems, we prove that the basin of attraction of such compact set coincides with the largest set from which all solutions are bounded.
The result appears to be useful when applied to several recent works involving hierarchical control architectures.
\end{abstract}

\end{frontmatter}

    \thispagestyle{empty}
    \pagestyle{empty}

\renewcommand{\thefootnote}{\arabic{footnote}}

%%%%%%%%%%%%%%%%%%%%%%%%%%%%%%%%%%%%%%%%%%%%%%%%%%%%%%%%%%%%%%%%%%%%%%%%%%%%%%%%

%%%%%%%%%%%%%%%%%%%%%%%%%%%%%%%%%%%%%%%%%%%%%%%%%%%%%%%%%%%%%%%%%%%%%%%%%%%%%%%%
\section{Introduction}
\label{sec:intro}

It is well known (see \cite{Sontag89CDC,Seibert90} for the global 
result or \cite[Theorem
3.1]{Vidyasagar80} for the local result) that for a cascade
interconnection of two nonlinear continuous-time systems, global asymptotic stability (GAS) of the origin of
the upper subsystem
and global asymptotic stability with zero input (0-GAS) of the origin of the 
lower subsystem implies local asymptotic stability (LAS) of the origin of the cascade, with
domain of attraction coinciding with the set of initial conditions
from which all trajectories are bounded. 
In particular, denoting by $x_1$ and $x_2$, respectively, the state
of the upper and lower subsystems, GAS of the closed set 
${\mathcal M} = \{(x_1,x_2):  x_1=0\}$ plus GAS of the
origin for initial conditions restricted to ${\mathcal M}$, plus
global boundedness (GB) guarantees GAS of the
origin of the cascade. The extension discussed in this note is threefold.
First, we consider nonlinear hybrid systems. Then, 
we extend this result to the case where we do not necessarily
insist on the origin and, finally, we do not insist on a cascaded nature of the overall system. 

The result finds applications in several contexts. The authors of this note have been using it as a tool to show stability properties in systems where different substates converge to desirable manifolds and hierarchically reach a desirable final closed set \cite{BisoffiCDC14,SassanoACC12,TregouetTCST15,CordioliECC15}. Several additional application can be found in the literature, for example it may be used as an alternative way to prove the results in \cite{OttAuto15}.

{\bf Notation}. %{\tt perhaps move here some notation}.
$\ball$ is the open unit ball centered at the origin. We denote 
$\mathcal{X}+\mathcal{Y} = \{z: z = x+y, x \in \mathcal{X}, y \in \mathcal{Y} \}$. We denote the distance $|z|_{\mathcal{M}}$ of a
    point $z$ from the set $\mathcal{M}$ as $|z|_{\mathcal{M}} :=
    \inf_{w\in \mathcal{M}} |z-w|$.

%{\tt I should be careful about non-complete solutions and maximal solutions.}

% %%%%%%%%%%%%%%%%%%%%%%%%%%%%%%%%%%%%%%%%%%%%%%%%%%%%%%%%%%%%%%%%%%%%%%%%%%%%%%%%
\section{Preliminaries and Definitions}
\label{sec:preliminaries}

Using the formalism in \cite{TeelBook12}, we consider a hybrid dynamical system described by
\begin{eqnarray}
\label{eq:hybrid}
{\mathcal H}: \quad \left\{
\begin{array}{lr}
x \in C, \quad \dot x \in F(x),\\
x \in D, \quad x^+ \in G(x). 
\end{array}
\right.
\end{eqnarray}
We suppose here that $(C,F,D,G)$ satisfy the hybrid basic conditions given in
\cite[Ass. 6.5]{TeelBook12}, which are reported here for completeness.

\begin{assumption}(Hybrid basic conditions)
\label{ass:hybrid_basic}
\begin{enumerate}[(i)]
\item $C$ and $D$ are closed subsets of $\mathbb{R}^{n}$;
\item $F : \mathbb{R}^{n} \rightrightarrows \mathbb{R}^{n}$ is outer semicontinuous and locally
bounded relative to $C$, and it is non-empty and convex for each $x \in C$;
\item $G : \mathbb{R}^{n} \rightrightarrows \mathbb{R}^{n}$ is outer semicontinuous and locally
bounded relative to $D$, and it is non-empty for each $x \in D$.
\end{enumerate}
\end{assumption}

\begin{remark}
The conditions in Assumption \ref{ass:hybrid_basic} are sufficient to imply nominal and robust well-posedness
of system (\ref{eq:hybrid}), see \cite[Def. 6.2]{TeelBook12}. Note that special cases correspond to continuous-time differential equations $\dot x = f(x)$ where $f$ is continuous (this corresponds to the case $C=\real^n$ and $D=\emptyset$) and discrete-time difference equations $x^+ = g(x)$ where $g$ is continuous (this corresponds to the case $D=\real^n$ and $C=\emptyset$).
In these two special cases, all the presented results apply and the solution concept introduced in \cite{TeelBook12} reduces to the classical solution concepts for continuous-time and discrete-time systems, respectively.
\end{remark}

We begin the section by introducing some definitions necessary to make
the statements above more precise. 

\begin{definition} \label{def:stab+attr}
Given hybrid system ${\mathcal H}$ in  (\ref{eq:hybrid}), a 
 closed set $\mathcal{X}\subset \real^n$ is\\
\begin{enumerate}
\item \label{itdef:1}
 {\em stable} for (\ref{eq:hybrid}) if for each pair $\epsilon>0$,
  $\Delta>0$, there exists $\delta>0$ such that all solutions $\varphi$ to ${\mathcal H}$ satisfy:
$$
\varphi(0,0) \in ((\mathcal{X}+\delta \ball) \cap \Delta \ball) \;\Rightarrow \;\varphi(t,j) \in \mathcal{X}+\epsilon \ball, 
$$ 
\noindent for all $(t,j) \in \dom \varphi$;

\item 
{\em attractive} if there exists
  $\delta>0$ such that all solutions $\varphi$ to ${\mathcal H}$ satisfy
\begin{align}
\label{eq:conv}
\varphi(0,0) \in (\mathcal{X}+\delta \ball) \;\Rightarrow \; \lim\nolimits\limits_{t+j \to +\infty} |\varphi(t,j)|_{\mathcal{X}} = 0;
\end{align}

\item  (locally) {\em asymptotically stable} (AS) if
  it is stable and attractive. Moreover, its {\em basin of attraction} $B_{\mathcal{X}}$
  is the largest set of initial conditions from which all solutions to 
  (\ref{eq:hybrid})
  converge to $\mathcal{X}$;

\item 
 {\em globally attractive} if
  (\ref{eq:conv}) holds for all $\delta>0$;

\item
 {\em strongly forward invariant} if all solutions starting in ${\mathcal{X}}$ remain in ${\mathcal{X}}$ for all times.
\end{enumerate}

A compact set $\mathcal{X}_c\subset \real^n$ is
\begin{enumerate}

\item 
 {\em uniformly attractive} from a compact set $K \subset \real^n$, $K \supset \mathcal{X}_c$,  if 
for each $\epsilon>0$, there exists $T$ such that 
all solutions $\varphi$ to ${\mathcal H}$ satisfy
$$
\varphi(0,0) \in K \;\Rightarrow \;  |\varphi(t,j)|_{\mathcal{X}_c} \leq \epsilon,\quad
$$
\noindent for all $(t,j) \in \dom \varphi$ such that $t+j\geq T$;

\item 
 {\em uniformly (locally) asymptotically stable} (UAS) if
  it is stable and uniformly attractive from each compact subset of
  its basin of attraction $B_{\mathcal{X}_c}$;

\item 
 {\em uniformly globally asymptotically stable} (UGAS) if it
  is UAS with $B_{\mathcal{X}_c} = \real^n$.
\end{enumerate}

Given a closed forward invariant set ${\mathcal Y}$ for the hybrid system
(\ref{eq:hybrid}),
each one of
the above properties holds {\em relative to the set ${\mathcal Y}$} if
it holds for initial conditions restricted to ${\mathcal Y}$.
\end{definition}

\begin{definition}
\label{def:boundedness}
Given system ${\mathcal H}$ in (\ref{eq:hybrid}) and 
a compact set $K \subset \real^n$,
solutions (or trajectories) of ${\mathcal H}$ are {\em uniformly bounded
  from $K$} if there exists $\Delta>0$ such that all solutions $\varphi$
  are such that $\varphi(0,0) \in K$ implies $|\varphi(t,j)| < \Delta$ for
all $(t,j) \in \dom \varphi$. Moreover, given a subset ${\mathcal X}$ of $\real^n$, 
solutions (or trajectories) of ${\mathcal H}$  are {\em uniformly bounded
  from ${\mathcal X}$} if they are uniformly bounded from each compact
subset of ${\mathcal X}$. If ${\mathcal X} = \real^n$ then
trajectories are {\em uniformly globally bounded} (UGB).
\end{definition}

\begin{remark}
The UGAS property in Definition~\ref{def:stab+attr} does not explicitly require
the UGB property of Definition~\ref{def:boundedness}. This is because
${\mathcal{X}_c}$ compact and
$f$ continuous, together with local stability of ${\mathcal{X}_c}$ and
global convergence to ${\mathcal{X}_c}$ implies uniform global
boundedness (see \cite[Prop. 6.3]{Goebel06}). Nevertheless, it has
been shown in \cite{TeelAuto06} that for time-varying nonlinear
systems (in other words, unbounded attractors) local stability and uniform global convergence does not guarantee UGB. In turns, UGB is necessary to
enforce a bound on the overshoots of the trajectories, thereby being
able to guarantee a $\mathcal{KL}$ bound (see \cite[Lemma 4.5]{Khalil3} or
\cite[Thm 7.12]{TeelBook12} for equivalence between UGAS and existence
of a global $\mathcal{KL}$ bound). 
\end{remark}

\section{Main stability result}
\label{sec:main}

The following lemma extends the result
of \cite{Seibert90,Sontag89CDC} on cascaded systems. 
Some related results
 have been presented 
for the continuous-time case 
 in \cite{IggidrMCSS96,SeibertSPRINGER70,Seibert95} and
 their relation with the lemma are clarified next.

\begin{lemma}
\label{lem:new}
Consider hybrid system ${\mathcal H}$ in (\ref{eq:hybrid}) satisfying the basic conditions in Assumption~\ref{ass:hybrid_basic}, and
assume that:
\begin{enumerate}

\item
\label{it:0}
the closed set ${\mathcal M}_e \subset \real^n$ is strongly forward invariant;

\item
\label{it:1}
the closed set ${\mathcal M}_i \subset {\mathcal M}_e$ is stable and globally attractive relative to ${\mathcal M}_e$;

\item\label{it:2}
 the compact set ${\mathcal M}_{\circ}\subset {\mathcal M}_i$ is stable
 and globally attractive relative to ${\mathcal M}_i$.
\end{enumerate}
% Then the set ${\mathcal M}_{\circ}$ is UAS relative to ${\mathcal M}_e$
% for $\dot x = f(x)$ and its
% basin of attraction $B_{\mathcal{M}_{\circ}}$ contains the largest set of 
% points in ${\mathcal M}_e$
%  from which all
% solutions are bounded.
% Moreover, if the above stated attractivity properties only hold
% locally, then the set ${\mathcal M}_{\circ}$ is UAS.
% Finally, if ${\mathcal M}_e = \real^n$, then the
% basin of attraction $B_{\mathcal{M}_{\circ}}$ {\em coincides} with the largest set 
% from which all solutions are bounded. 
%  In particular, if solutions are UGB,
% then the set ${\mathcal M}_{\circ}$ is UGAS
% for $\dot x = f(x)$.
Then the set ${\mathcal M}_{\circ}$ is UAS relative to ${\mathcal M}_e$
for ${\mathcal H}$, with
basin of attraction $B_{\mathcal{M}_{\circ}}$ such that $B_{\mathcal{M}_{\circ}} \cap
{\mathcal M}_e$
coincides with the largest subset of ${\mathcal M}_e$ from which all
solutions are bounded. In particular, if solutions are UGB relative to ${\mathcal M}_e$,
then the set ${\mathcal M}_{\circ}$ is UGAS
for ${\mathcal H}$ relative to ${\mathcal M}_e$.
Finally, if the above stated attractivity properties only hold
locally, then the set ${\mathcal M}_{\circ}$ is UAS relative to $\mathcal{M}_e$.
%\begin{enumerate}
%\item
%\label{it:1}
%the closed set ${\mathcal M} \subset \real^n$ is globally asymptotically
%  stable;
%
%\item\label{it:2}
% the compact set ${\mathcal M}_{\circ}\subset {\mathcal M}$ is
% globally asymptotically
%  stable for the dynamics $\dot x = f(x)$ restricted to the set ${\mathcal M}$.
%\end{enumerate}
%
%Then the set ${\mathcal M}_{\circ}$ is locally asymptotically stable (LAS)
%for $\dot x = f(x)$, with
%domain of attraction coinciding with the largest set of initial conditions from which all
%solutions are bounded.
\end{lemma}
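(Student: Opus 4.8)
The plan is to reduce to the case $\mathcal{M}_e=\real^n$ and then reproduce, in the hybrid setting, the two‑step cascade argument of \cite{Seibert90,Sontag89CDC}, using that Assumption~\ref{ass:hybrid_basic} entails well‑posedness of $\mathcal{H}$ and hence sequential (graphical) compactness of its solution set. First, since $\mathcal{M}_e$ is closed and strongly forward invariant, the solutions of $\mathcal{H}$ originating in $\mathcal{M}_e$ coincide with the solutions of the hybrid system obtained by intersecting $C,D$ (and the images of $G$) with $\mathcal{M}_e$, and the latter still satisfies Assumption~\ref{ass:hybrid_basic}; I would rename it $\mathcal{H}$ and assume $\mathcal{M}_e=\real^n$, so that $\mathcal{M}_i$ is closed, forward invariant (as implicit in hypothesis (\ref{it:2})), stable and globally attractive, while $\mathcal{M}_\circ\subset\mathcal{M}_i$ is compact, stable and globally attractive relative to $\mathcal{M}_i$. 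Since $\mathcal{M}_\circ$ is compact and the dynamics restricted to $\mathcal{M}_i$ are well posed, stability plus global attractivity of $\mathcal{M}_\circ$ relative to $\mathcal{M}_i$ upgrade automatically to UGAS relative to $\mathcal{M}_i$ (the standard uniformity argument for compact attractors, as in the Remark after Definition~\ref{def:boundedness}), so there is $\beta\in\mathcal{KL}$ with $|\varphi(t,j)|_{\mathcal{M}_\circ}\le\beta(|\varphi(0,0)|_{\mathcal{M}_\circ},t+j)$ for every solution $\varphi$ with $\varphi(0,0)\in\mathcal{M}_i$.

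The crux is stability of $\mathcal{M}_\circ$ relative to $\real^n$. Fix $\eps,\Delta>0$ and argue by contradiction: take solutions $\varphi_k$ with $|\varphi_k(0,0)|_{\mathcal{M}_\circ}\le 1/k$ and $|\varphi_k(0,0)|\le\Delta$, hence (by compactness of $\mathcal{M}_\circ$) a subsequence with $\varphi_k(0,0)\to z^\star\in\mathcal{M}_\circ$, and let $(t_k,j_k)$ be the first hybrid time at which $|\varphi_k|_{\mathcal{M}_\circ}$ reaches $\eps$; with an a priori jump‑overshoot bound the truncated pieces lie in the compact set $\mathcal{M}_\circ+2\eps\ball$, so by sequential compactness a subsequence converges graphically to a solution $\varphi$ with $\varphi(0,0)=z^\star$. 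If $(t_k,j_k)$ stays bounded, $\varphi$ is defined up to some $(t^\star,j^\star)$ with $|\varphi(t^\star,j^\star)|_{\mathcal{M}_\circ}\ge\eps$; but $z^\star\in\mathcal{M}_\circ\subset\mathcal{M}_i$ and $\mathcal{M}_i$ is forward invariant, so $\varphi$ stays in $\mathcal{M}_i$ and the $\mathcal{KL}$ bound forces $|\varphi(t^\star,j^\star)|_{\mathcal{M}_\circ}\le\beta(0,t^\star+j^\star)=0$, a contradiction. If $(t_k,j_k)$ is unbounded, I would combine the fact that the truncated solutions remain arbitrarily long in the compact set $\mathcal{M}_\circ+2\eps\ball$ with (i) uniform attractivity of $\mathcal{M}_i$ on that compact set and (ii) robustness of the $\mathcal{KL}$ estimate above under the small mismatch between the dynamics on $\mathcal{M}_i$ and on a thin neighbourhood of it — a consequence of outer semicontinuity and local boundedness of $F$ and $G$ — to conclude that after a uniform time they must lie within $\eps$ of $\mathcal{M}_\circ$, contradicting the choice of $(t_k,j_k)$. \emph{I expect this unbounded‑time case, i.e.\ propagating the ``$\mathcal{M}_i$‑restricted'' $\mathcal{KL}$ bound to a neighbourhood of $\mathcal{M}_i$ over an infinite horizon, to be the main obstacle of the proof.}

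It remains to characterise the basin and obtain uniform attractivity. For $z_0\in\real^n$ from which all solutions are bounded, global attractivity of $\mathcal{M}_i$ gives $|\varphi(t,j)|_{\mathcal{M}_i}\to0$; being bounded and complete, $\varphi$ has a nonempty, compact, weakly invariant $\omega$‑limit set $\Omega\subset\mathcal{M}_i$ (\cite[Ch.~6]{TeelBook12}). Weak backward invariance lets one run a solution inside $\Omega\subset\mathcal{M}_i$ arbitrarily far back from any $\xi\in\Omega$; inserting this into the $\mathcal{KL}$ bound yields $|\xi|_{\mathcal{M}_\circ}=0$, hence $\Omega\subset\mathcal{M}_\circ$ and $\varphi\to\mathcal{M}_\circ$, so $z_0\in B_{\mathcal{M}_\circ}$; the converse inclusion is immediate since $\mathcal{M}_\circ$ is compact. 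Uniform attractivity of $\mathcal{M}_\circ$ from each compact $K\subset B_{\mathcal{M}_\circ}$ then follows from uniform boundedness of the solutions from $K$ (compactness plus convergence to a compact set), uniform attractivity of $\mathcal{M}_i$ on the resulting compact reachable set, the $\mathcal{KL}$ bound, and the stability of Step~2; together with that stability this yields UAS relative to $\mathcal{M}_e$. Finally, if solutions are UGB relative to $\mathcal{M}_e$ the basin characterisation makes $B_{\mathcal{M}_\circ}$ equal to $\real^n$, giving UGAS; and if the attractivity hypotheses hold only locally, repeating the arguments with all neighbourhoods taken sufficiently small gives UAS relative to $\mathcal{M}_e$.
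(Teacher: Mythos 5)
Your proposal does not follow the paper's route: the paper's entire argument is a reduction to \cite[Corollary 19]{GoebelCSM09} (the reduction principle for nested \emph{compact} attractors of well-posed hybrid systems), applied to the restricted system $\overline{\mathcal H}$ obtained by intersecting $C$ and $D$ with ${\mathcal M}_e\cap({\mathcal M}_\circ+\bar M\ball)$, precisely so that ${\mathcal A}_1={\mathcal M}_i\cap(\overline C\cup\overline D)$ becomes compact; the UAS and basin statements then follow from \cite[Prop.~6.2, 6.3]{Goebel06} and the arbitrariness of $\bar M$. You instead attempt a self-contained, first-principles re-derivation of that reduction principle via graphical compactness of the solution set. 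That is a legitimate alternative in principle, but as written it has a genuine gap exactly where you flag it: the unbounded-$(t_k,j_k)$ case of the stability contradiction. The two ingredients you invoke there are not established. First, ``uniform attractivity of ${\mathcal M}_i$ on that compact set'' does not follow from hypothesis~(2): ${\mathcal M}_i$ is a closed, possibly unbounded set, and for non-compact attractors stability plus pointwise global attractivity does not automatically upgrade to uniform attractivity from compact sets of initial conditions (the paper's own remark after Definition~\ref{def:boundedness}, citing \cite{TeelAuto06}, warns about exactly this failure mode); some compactification step, such as the paper's intersection with ${\mathcal M}_\circ+\bar M\ball$, is needed before uniformity can be claimed. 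Second, ``robustness of the ${\mathcal KL}$ estimate under the small mismatch between the dynamics on ${\mathcal M}_i$ and on a thin neighbourhood of it'' is not a consequence of outer semicontinuity alone that you can wave at: a solution of ${\mathcal H}$ living near but off ${\mathcal M}_i$ is not a solution (nor an obvious inflation) of the system restricted to ${\mathcal M}_i$, and making this comparison rigorous is essentially the whole content of the cited Corollary~19. So the crux of the lemma is asserted rather than proved.

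The remaining parts of your plan are essentially sound and consistent with the paper: the restriction to ${\mathcal M}_e$ using strong forward invariance mirrors the construction of $\overline C,\overline D,\overline G$ in the paper's proof; the bounded-$(t_k,j_k)$ branch of the contradiction is fine (forward invariance of ${\mathcal M}_i$ follows from its stability, and graphical convergence preserves the escape to distance $\eps$); and the basin characterisation via $\omega$-limit sets and weak backward invariance is a workable, if heavier, substitute for the paper's argument that every bounded solution in ${\mathcal M}_e$ is captured by some $\bar M$-restriction while unbounded solutions are excluded by \cite[Prop.~6.3]{Goebel06}. My recommendation: either invoke \cite[Corollary 19]{GoebelCSM09} on a compactified restriction as the paper does, or, if you want the self-contained proof, supply the missing uniformity and robustness arguments explicitly (e.g., via the semiglobal practical robustness of asymptotic stability of compact sets for well-posed hybrid systems in \cite[Ch.~7]{TeelBook12}), in both cases working with ${\mathcal M}_i$ intersected with a large compact set rather than with ${\mathcal M}_i$ itself.
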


\begin{proof}
First notice that the property at item~\ref{it:2} is well
defined because, from item~\ref{it:1}, the set ${\mathcal M}_i$ is
(strongly) forward invariant.

\noindent
{\em Preliminary Step.}
The lemma is proven using \cite[Corollary 19]{GoebelCSM09}, in a
similar way to what is done around \cite[eqs. (23),
(24)]{GoebelCSM09}. 
%To this aim, we only use the continuous-time
%dynamics of the hybrid systems in \cite{GoebelCSM09} by selecting
%$F(x) = \{f(x)\}$, $D = \emptyset$ and any $G(x)$ (since $D$ is empty,
%the selection of $G$ does not matter),
%which satisfy the basic assumptions and
%correspond to a hybrid description of the continuous-time dynamics $\dot x = f(x)$. 
%%
%
%simply add extra solutions at the origin.
%Note that all solutions to system $\dot x = f(x)$ are also
%purely continuous-time solutions
%of this hybrid system which only has extra hybrid solutions for
%initial conditions at the origin.
%
In particular, we apply
\cite[Corollary 19]{GoebelCSM09} by intersecting ${\mathcal M}_i$
with arbitrarily large compact subsets of ${\mathcal M}_e$. Namely,
given an arbitrary positive number $\bar M$, we apply 
\cite[Corollary 19]{GoebelCSM09} to hybrid system 
$\overline{\mathcal H}:=(\overline C, \overline F, \overline D, \overline G)$ 
and compact sets ${\mathcal A}_1$ and ${\mathcal A}_2$ selected as:\footnote{Note that it is not necessary to intersect $\overline G$ with ${\mathcal M}_e$ due to the assumed strong forward invariance property.}
\begin{equation}
\label{eq:sets}
\begin{array}{l}
\overline C:= C \cap {\mathcal M}_e \cap ({\mathcal M}_{\circ} + \bar M\ball) \\
\overline F := F \\
\overline D:= D \cap {\mathcal M}_e \cap ({\mathcal M}_{\circ} + \bar M\ball) \\
\overline G := G \cap ({\mathcal M}_{\circ} + \bar M\ball), \\
\mathcal{A}_1 = {\mathcal M}_i\cap (\overline C\cup \overline D) \\ \mathcal{A}_2 = {\mathcal M}_{\circ}.
\end{array}
\end{equation}
 From \cite[Corollary 19]{GoebelCSM09} 
 we conclude that the set ${\mathcal M}_{\circ}$ is globally
asymptotically stable (GAS) for the restricted hybrid dynamics $\overline {\mathcal H}$.
%%%
%
%Once the LAS of ${\mathcal M}_{\circ}$ has been established,
%the rest of the theorem
%can be proven using \cite[Corollary 19]{GoebelCSM09}, in a
%similar way to what is done around \cite[eqs. (23),
%(24)]{GoebelCSM09}. To this aim, we only use the continuous-time
%dynamics of the hybrid systems in \cite{GoebelCSM09} by selecting
%$F(x) = \{f(x)\}$, $D = \{0\}$ and $G(x) = \{0\}$,
%which satisfy the basic assumptions and
%simply add extra solutions at the origin.
%Note that all solutions to system $\dot x = f(x)$ are also
%purely continuous-time solutions
%of this hybrid system which only has extra hybrid solutions for
%initial conditions at the origin.
%Next we apply
%\cite[Corollary 19]{GoebelCSM09} by intersecting ${\mathcal M}$
%with arbitrarily large compact sets, that is
%given an arbitrary positive number $\bar M$, we select
%$C = {\mathcal M}_{\circ} + \bar M\ball$, $\mathcal{A}_1 = {\mathcal
%  M} \cap C$
%and $\mathcal{A}_2 = {\mathcal M}_{\circ}$. From the
%corollary, we conclude that the set $\mathcal{A}_2$ is globally
%pre-asymptotically stable for the dynamics restricted to the compact set
%$C$.

\noindent
{\em UAS of ${\mathcal M}_{\circ}$ relative to ${\mathcal M}_e$}. 
%Global pre-AS of ${\mathcal M}_{\circ}$ comprises
%pre-AS of ${\mathcal M}_{\circ}$.
Recalling
%(see \cite[page 59]{GoebelCSM09})
that (global) AS of ${\mathcal M}_{\circ}$
comprises its stability and noticing that,
from (\ref{eq:sets}),
 ${\mathcal M}_{\circ}$ 
is in the interior of $\overline C\cup \overline D$ relative to 
${\mathcal M}_e \cap (\overline C\cup \overline D)$, %for any value of $\bar M>0$,
then for any scalar $\bar M>0$, stability of ${\mathcal M}_{\circ}$ 
for $\overline{\mathcal H}$,
%the dynamics restricted to $C$ 
together with strong forward invariance of ${\mathcal M}_e$,
implies that there is a small enough $\delta>0$
such that all solutions to $\overline{\mathcal H}$ starting in 
$({\mathcal M}_{\circ}+\delta \ball) \cap {\mathcal M}_e$
are contained in
$({\mathcal M}_{\circ}+\overline M \ball) \cap {\mathcal M}_e$,
namely all solutions to ${\mathcal H}$ starting in this set are also solutions to 
the restricted dynamics $\overline{\mathcal H}$.
Then AS of ${\mathcal M}_{\circ}$ for $\overline{\mathcal H}$
relative to ${\mathcal M}_e$ implies 
AS of ${\mathcal M}_{\circ}$ for ${\mathcal H}$
relative to ${\mathcal M}_e$.
 Moreover, since ${\mathcal M}_{\circ}$
is compact and ${\mathcal H}$ satisfied Assumption~\ref{ass:hybrid_basic}, then \cite[Prop. 6.2]{Goebel06} (see
also \cite[Thm 7.21]{TeelBook12}) implies that AS of ${\mathcal
  M}_{\circ}$ is uniform in ${\mathcal M}_e$.

\noindent
{\em Basin of attraction}. 
First
 note that for any bounded solution in ${\mathcal M}_e$ there exists a large enough selection of
$\bar M$ such that $\overline C \cup \overline D \cup \overline G(\overline D)$ 
in (\ref{eq:sets}) contains that solution (and 
\cite[Corollary 19]{GoebelCSM09} 
establishes its convergence). Therefore, any such solution is guaranteed to be
contained in the set $B_{\mathcal{M}_{\circ}}\cap  {\mathcal M}_e$.
Conversely, any solution which is not (uniformly) bounded is not in the
domain of attraction because 
\cite[Prop. 6.3]{Goebel06} establishes that UAS of
a compact set $\mathcal{M}_{\circ}$ implies that solutions are uniformly
bounded from $B_{\mathcal{M}_{\circ}}$.
As a consequence, by the arbitrariness of
$\bar M$, set $B_{\mathcal{M}_{\circ}}\cap  {\mathcal M}_e$
  coincides with the largest subset of ${\mathcal M}_e$
from which all solutions are bounded.
\end{proof}

\begin{remark} \label{rem:KL}
Applying \cite[Thm 6.5]{Goebel06} we have that
both in the local and global cases
 the result of Lemma~\ref{lem:new} implies the existence of a
class $\mathcal{KL}$ estimate for the distance of the solution from the attractor
${\mathcal M}_{\circ}$.  In particular, when 
${\mathcal M}_e = \real^n$ and
all solutions are bounded,
so that the lemma implies UGAS (namely
$B_{\mathcal{M}_{\circ}} = \real^n$), there exists $\beta \in \mathcal{KL}$ such that all solutions satisfy
$|\varphi(t,j)|_{{\mathcal M}_{\circ}} \leq \beta(|\varphi(0,0)|_{{\mathcal{M}_{\circ}}},t+j)$ for all $(t,j) \in \dom \varphi$. When $B_{\mathcal{M}_{\circ}}$ is a
strict subset of $\real^n$ the bound holds with the distance $|\cdot
|_{{\mathcal{M}_{\circ}}}$ replaced by a proper indicator function
%$\omega(\cdot)$ 
of $\mathcal{M}_{\circ}$ with respect to $B_{\mathcal{M}_{\circ}}$.
Finally, \cite[Prop. 6.4]{Goebel06}
implies that $B_{\mathcal{M}_{\circ}}$ is an open subset of $\real^n$ containing $\mathcal{M}_{\circ}$.
\end{remark}

% \begin{remark}
% \label{rem:hyb_ext} 
% To avoid overloading notation, the developments of this section are
% formalized for continuous-time dynamical systems $\dot x = f(x)$ with
% $f$ continuous. Nevertheless, using the notation \cite{TeelBook12},
% all the definitions and the statement and proof of Lemma~\ref{lem:new}
% can be readily rewritten for establishing parallel uniform global or local pre-asymptotic stability
% properties
% of a compact attractor ${\mathcal M}_{\circ}$ for the
% hybrid dynamical system:

%  Indeed, the steps of the proof
% of Lemma~\ref{lem:new}
% remain unchanged because we rely on results from \cite{Goebel06,GoebelCSM09} that hold for general
% hybrid dynamical systems of the form (\ref{eq:hybrid}) and compact attractors ${\mathcal M}_{\circ}$.
% \end{remark}

\begin{remark}
The result in Lemma~\ref{lem:new} is relevant mostly due to its
simple formulation and broad applicability. From the point of view of the proof technique,
it is a simple extension of existing results. 
For the case ${\mathcal M}_e = \real^n$ (so that item~\ref{it:0} is trivially guaranteed) and continuous-time systems $\dot x =f(x)$,
an alternative proof 
%{\tt NEED TO CHECK WHETHER THEY ALLOW FOR TWO MEASURES}
can be obtained using the results in \cite{IggidrMCSS96} if one
adds the extra assumption that $f$ is locally Lipschitz
and the non-uniform attractivity property in (\ref{eq:conv}) is 
strengthened to a uniform one.
In this strengthened case,
applying
the converse Lyapunov theorem in \cite{TeelPraly00} (see also \cite{Wilson69}),
item~\ref{it:1} implies the existence of a smooth
positive semidefinite Lyapunov function $V_{\mathcal M}$ establishing global asymptotic stability of
${\mathcal M}_i$ and then we can apply
\cite[Corollary 1]{IggidrMCSS96} with ${\mathcal M} = M^* = M = M_0$.
Indeed, using the notation in
\cite{IggidrMCSS96} we have $M_0 = {\mathcal M}$ (the set where
$V_{\mathcal M}$ vanishes) and from the converse Lyapunov construction, also
$M=M_0$. Finally, since $M= M_0$ is invariant, then $M^* = M$.
The reader is also referred to \cite[Example 2]{IggidrMCSS96} where
the same idea is applied to a cascaded interconnection.
The
local asymptotic stability (AS) property of Lemma~\ref{lem:new}, under the
stated continuity assumptions for $f$, has also been established (without any proof) in
\cite[Theorem 2]{SeibertSPRINGER70}. A proof can be found in the more
general case of (possibly infinite dimensional) semidynamical systems
in \cite[Theorem 4.13]{Seibert95}.
Finally, under a strengthened local Lipschitz
assumption on $f$, the proof technique of
\cite[Example 2]{IggidrMCSS96} can be used
to establish the local result.
\end{remark}

\begin{remark}
One may wonder whether the result of Lemma~\ref{lem:new} is truly
more general than the classical cascaded systems result of
\cite{Sontag89CDC}, \cite[Example 2]{IggidrMCSS96} and
\cite[eqs. (23), (24)]{GoebelCSM09} (at least for the continuous-time case). A partial answer to this question arises if one
focuses on a purely continuous-time setting and on the case ${\mathcal M}_{i} = \{x :
h(x)=0\}$ and attempts to prove the existence of 
a nonlinear change of coordinates
highlighting a cascaded structure between an upper subsystem whose
state is given by $x_1 = h(x) \in \real^{n_y}$ and whose dynamics is guaranteed to
converge to zero by item~\ref{it:1} of Lemma~\ref{lem:new}, and a
lower subsystem whose state should be given by the completion $x_2 =
h_{compl}(x)\in \real^{n-n_y}$, where the function $h_{compl}$ is such that the
overall function $x \mapsto T(x) := [h(x)\; h_{compl}(x)]$ is a
diffeomorphism.
Unfortunately, assessing whether such change of coordinates exists does not seem
to be an easy task to accomplish, in general. 
Indeed, even if only wanting to
define this change of coordinates locally in a 
neighborhood $U^{o}$ of a point $x^{o}\in \real^n$, its existence is
related to invariance of the \emph{completely integrable} distribution
generated by the columns of $Ker(\nabla h(x)^{T})$ with respect
to the vector field $f$, which is a much \emph{stronger}
hypothesis of invariance of the submanifold $\mathcal{M}_i$
with respect to $f$, see \cite[Lemma 1.6.1]{Isidori}
for more detailed discussions.
\end{remark}

\noindent
{\bf Acknowledgments}.
The authors would like to thank Andy Teel for useful suggestions.

%%%%%%%%%%%%%%%%%%%%%%%%%%%%%%%%%%%%%%%%%%%%%%%%%%%%%%%%%%%%%%%%%%%%%%%%%%%%%%%%
\bibliographystyle{plain}
%\bibliography{bibliography,allocation,zack,nonlin,hybrid,converse}

%   \flushend{}

\end{document}